\newtheorem{lemma}{Lemma}
\newcommand{\be}{\begin{equs}}
\newcommand{\ee}{\end{equs}}
\newcommand{\X}{X}
\newcommand{\Sb}{S}
\newcommand{\U}{U}
\DeclareMathOperator*{\argmin}{arg\,min} 
\DeclareMathOperator*{\argmax}{arg\,max} 
\DeclareMathOperator{\cov}{cov}
\newcommand{\z}{Z}
\title[Removing the influence of a groups variable]{Removing the influence of a group variable in high-dimensional \\ predictive modelling}
\author{Emanuele Aliverti}
\address{Department of Statistical Sciences, University of Padova}
\email{aliverti@stat.unipd.it}
\author{Kristian Lum}
\address{Human Rights Data Analysis Group, San Francisco}
\author{James E. Johndrow}
\address{Department of Statistics, Stanford University}
\author[Aliverti, Lum, Johndrow and Dunson]{David B. Dunson}
\address{Department of Statistical Science, Duke University}
\begin{document}

\date{}
\begin{abstract}
In many application areas, predictive models are used to support or make important decisions. There is increasing awareness that these models may contain spurious or otherwise undesirable correlations. Such correlations may arise from a variety of sources, including batch effects,  systematic measurement errors, or sampling bias.
Without explicit adjustment, machine learning algorithms trained using these data can produce poor out-of-sample predictions which propagate these undesirable correlations.
We propose a method to pre-process the training data, producing an adjusted dataset that is statistically independent of the nuisance variables with minimum information loss.
We develop a conceptually simple approach for creating an adjusted dataset in high-dimensional settings based on a constrained form of matrix decomposition. 
The resulting dataset can then be used in any predictive algorithm with the guarantee that predictions will be statistically independent of the group variable.
We develop a scalable algorithm for implementing the method, along with theory support in the form of independence guarantees and optimality.
The method is illustrated on some simulation examples and applied to two case studies: removing machine-specific correlations from brain scan data, and removing race and ethnicity information from a dataset used to predict recidivism. That the motivation for removing undesirable correlations is quite different in the two applications illustrates the broad applicability of our approach.

\end{abstract}

\begin{keywords}
{ \small  Constrained optimization; Criminal justice; Neuroscience; Orthogonal predictions; Predictive modelling; Singular value decomposition.}
\end{keywords}

\section{Introduction}
\label{sec:1}

As machine learning models are deployed in high-stakes application areas, there has been growing recognition that machine learning algorithms can reproduce unwanted associations encoded in the data upon which they were trained  \citep[e.g.][]{dunson2018statistics, zech2018confounding}. 
Such potentially problematic associations are typically between the outcome of interest $Y$, which is specific to the scientific application under investigation, and one or more group variables $Z$.  A model that retains correlations between $Y$ and $Z$ may be undesirable for different reasons depending on the application area. 
This article focuses on a method for adjusting training data to prevent models fit to that data from learning associations between undesirable features and the outcome of interest. Without proper adjustment, the association between the outcome of interest and unwanted group information can obscure scientifically relevant signals leading to biased results and misleading conclusions, or perpetuate socially undesirable stereotypes or practices.

In this article we focus on two challenging research application areas where this problem is particularly salient. The first is the field of ``omic'' data,  where there is strong interest in using algorithms to relate images, scans, and medical records to outcomes of interest, such as disease and subject-specific phenotypes. A recent example highlights how unwanted associations can manifest in these settings. \citet{zech2018confounding} showed that radiographic image data encoded information on the specific hospital system from which the data were collected, since different systems tended to use different imaging equipment. The systematic differences in the image data produced by the different equipment were not readily apparent to the human eye. 

However,  when the data were used to train a machine learning model for pneumonia screening,  the model learned to associate these equipment-specific characteristics with the outcome of interest. That is, the model leveraged the fact that some hospital systems  saw higher rates of disease and also used equipment that left a specific signature in the imaging data. The model's predictions were then partially based on correlations between the signature of the machines used in the higher-prevalence hospital systems and the disease. The model's reliance on such associations jeopardizes the generalizability of the results, as there is no reason to believe that other higher prevalence hospitals will use similar equipment in the future. This issue is also detrimental for in-sample evaluation and inferential purposes, since regarding such associations as risk factors for disease is clinically misleading.

In the field of medical imaging and biostatistics, removal of spurious effects in high dimensional data has been an important problem for at least the last decade.
Various other tools for dealing with this issue have been developed, with methods based on matrix factorisation being very popular in genomics \cite[e.g.][]{wall2003singular}.
For example, \citet{alter2000singular} use a singular value decomposition (\textsc{svd}) of the original expression matrix to filter out the singular vector associated with systematic and non-experimental variations, and reconstruct an adjusted expression matrix with the remaining singular vectors. A similar strategy was pursued in \citet{bylesjo2007orthogonal} and \citet{leek2007}, where matrix decompositions are used to separate the latent structures of the expression matrix that are not affected by batch assignment from the nuisance batch variable. Model-based methods have also been successful. For example, distance discrimination based on support vector machines \citep{benito2004adjustment} and multilevel models estimated via empirical Bayes \citep{johnson2007} have been developed to deal with groups with few observations. For a more comprehensive review of approaches for batch correction and array normalization see \citet{luo2010comparison}, \citet{lazar2012}, and references therein.

The focus of these methods has mostly been on pre-processing high-dimensional data in order to remove the batch effects and then conducting different analysis on the adjusted data; for example, visualising the data to locate clusters of similarly expressed genes \citep[e.g.][]{lazar2012}. Data visualisation is often used as the main evaluation tool to determine the success in removing batch-effects, while assessment and evaluation for out-of-sample data has not received much interest \citep{luo2010comparison}. We take an approach that is similar to the methods based on matrix decomposition, with the focus being on providing a low-dimensional approximation of the data matrix which guarantees that batch effects are removed with minimum information loss. Unlike the approaches popular in the biostatistics literature, our approach explicitly focuses on quantitatively addressing the problem of removing unwanted associations by imposing a form of statistical independence among the adjusted data and the group membership variable and explicitly enforcing such constraints in the matrix decomposition. In addition, our approach focuses on the generalization problem, guaranteeing that predictions for \emph{future} data will be statistically independent from group variables.

Our second motivating example comes from algorithmically automated or assisted decision-making, where algorithms are deployed to aid or replace human decision-making. Examples include algorithms for selecting short-listed candidates for job positions or college admission. 
If the measurement error varies with group membership status, associations between the group variable and the outcome of interest may obscure correlations with scientific factors of genuine interest.  Moreover, when groups are defined by socially sensitive attributes -- such as race or gender -- a model that learns these  associations may lead to unfair or discriminatory predictions. 
For example, there has been much recent attention in the ``algorithmic fairness'' literature on the use of criminal risk assessment models, many of which use demographic, criminal history, and other information to predict whether someone who has been arrested will be re-arrested in the future. These predictions then inform decisions on pre-trial detention, sentencing, and parole. In many cases the model seeks to predict re-arrest, which may be influenced by racially biased policing \citep{bridges1988law, rudovsky2001law, simoiu2017problem}. When risk assessment models are trained using these data, the end result is that individuals in racial minority groups tend to be systematically assigned to higher risk categories on average \citep[][]{johndrow2019algorithm,angwin2016machine}. 
Even when information on race is not explicitly included as a covariate, this phenomenon persists, since other variables that are included in the model are strongly associated with race. 

Models for which one group is more likely to be flagged as ``high risk" are in conflict with one notion of algorithmic fairness-- statistical parity. However, many definitions are available in the literature. See \citet{berk2018, mitchell2018prediction, corbett2018measure} 
for recent overviews of mathematical notions of fairness. 
Arguably, each definition proposed in the literature can be regarded as a valid concept of algorithmic fairness, though which notion is most sensible depends heavily on the data and application. Some methods for achieving notions of fairness have focused on altering the objective function used in estimation to, for example, balance the false positive or negative rate across groups \citep[e.g.][]{hardt2016,zafar2017}. Others have focused on defining metrics that more fairly characterize relevant differences between individuals to enforce notions of individual fairness. 
The recent work of \cite{zhang2018mitigating} achieves a variety of notions of fairness through adversarial learning--- to achieve demographic parity, the adversary is tasked with inferring group membership status from model's predictions. This method is applied to recidivism prediction in \cite{wadsworth2018achieving}. 

Other approaches to creating ``fair'' models modify the training data rather than the objective function, with some focusing on modifications to $Y$ \citep{kamiran2009classifying}. The approach we take most closely follows \citet{johndrow2019algorithm}, \citet{adler2018auditing} and \citet{feldman2015certifying} in pre-processing the covariates $X$, guaranteeing that any model estimated on the ``adjusted'' data will produce out-of-sample predictions with formal guarantees of statistical independence. Our algorithm has significant advantages over existing approaches in scalability and ability to handle high-dimensional data efficiently. 

Motivated by datasets from these two distinct application areas, in this article we propose a simple method to adjust high-dimensional datasets in order to remove associations between covariates and a nuisance or otherwise undesirable variable. 
Our procedure creates an adjusted dataset with guarantees that algorithms estimated using the adjusted data produce predictions which will be statistically independent from the nuisance variable. The main advantage of our contribution is its simplicity and scalability to a large number of covariates ($p$), including when $p$ is greater than the number of observations $n$. In this sense, it is particularly well-suited for applications like brain imaging and ``omic'' data, in which the observed covariates are high-dimensional. It also has significant advantages in the case of highly collinear predictors, which is very common in applications. Moreover, the solution we propose has theoretical guarantees, both in terms of optimal dimension reduction and the ability to achieve independence from the undesirable variables under a linearity condition.
We also provide guarantees of minimal information loss during the pre-processing.

It is worth mentioning that the goal of achieving statistical independence is not without controversy.
In medical imaging predictions, one could imagine a scenario where hospitals use a similar equipment type and also treat similar patient populations.
Scrubbing the data of dependence on imaging equipment may also remove some information that is useful for prediction and is not captured by other covariates in the dataset.
A generous read of a common critique of creating race-independent data in the recidivism prediction setting is that race might be an important risk factor, since the correlation between race and the likelihood of re-arrest reflects the reality that people of color are more likely to experience conditions leading to criminality, such as poverty.
Creating race-independent predictions may {\it under}-estimates the likelihood of re-arrest, even after accounting for systematic bias in the process by which individuals are arrested.
In reality, it is difficult to know how much of the difference in re-arrest rate can be attributed to differential patterns of enforcement relative to differential participation in criminal activity.
In this setting, it may be ethically appealing to choose equal group-wise treatment, which is achieved by enforcing independence between predictions and race. Furthermore, even if a reasonable ground truth were available, one might still want to create race-independent predictions to avoid further exacerbating the existing racial disparities in incarceration rates. 

\section{Data description}

The first case study discussed in this article comes from medical imaging data, and is drawn from a study in neuroscience conducted by the Human Connectome Project (\textsc{hcp}) on $n=1056$ adult subjects with no history of neurological disorder \citep{glasser2016human,glasser2013minimal}. The study provides, for each individual, information on the structural interconnections among the $68$ brain regions characterizing the Desikan atlas \citep{desikan:2006}, measured through a combination of diffusion and structural magnetic resonance imaging; see \citet{zhang2018relationships} for additional details.
Many different features are also available, covering a wide range of biographical, physiological and behavioural information at the individual level and technical information related to the specific session in which brain scans were collected. 
For an extended description of the Humane Connectome Project, the tools involved in the collection process and the aims of the study see \citet{zhang2018relationships} and \citet{glasser2016human,glasser2013minimal}.
For our purposes, it is enough to characterize the outcomes of interest as physiological and behavioural traits and the covariates as data on the presence and strength of connections between the $68$ brain regions.

Recent developments in neuroscience have stimulated considerable interest in analysing the relationship among brain structure or activity and subject-specific traits, with the main focus being on detecting if variations in the brain structure are associated with variation in phenotypes \citep[e.g.][]{genovese2002thresholding,zhang2018relationships,Durante:2018}.
There is evidence for the existence of brain differences across subjects with severe drug addictions, both in terms of functional connectivity \citep{wilcox2011enhanced, kelly2011reduced} and volumes of brain regions \citep{beck2012effect,goldstein2009neurocircuitry}. As discussed in the Introduction, a fundamental problem with observational medical data is the presence of spurious associations such as batch effects. In neuroscience, it is  well known that  subject motion, eye movements, different protocols and hardware-specific features, among many others can complicate data analysis \citep[e.g.][]{basser2002diffusion,sandrini2011use}. Here, we use the \textsc{hcp} data to investigate the relationship between connectivity and drug use while removing batch effects from the identity of the imaging equipment used for each subject. 

The second case study discussed in this article comes from algorithmic criminal risk assessment. We will focus here on the \textsc{compas} dataset, a notable example in the fairness literature which includes detailed information on criminal history for more than $6000$ defendants in Broward County, Florida. 
For each individual, several features on criminal history are available, such as the number of past felonies, misdemeanors, and juvenile offenses; additional demographic information include the sex, age and race of each defendant.
Defendants are followed over two years after their release, and the data contain an indicator for whether each defendant is re-arrested within this time range.
See \citet{larson2016we} for details on the data collection process and additional information on the dataset.
The focus of this example is on predicting two-year recidivism as a function of defendant's demographic information and criminal history. 
As discussed in the Introduction, algorithms for making such predictions are routinely used in courtrooms to advise judges, and concerns about the fairness of such tools with respect to race of the defendants were recently raised, mainly using this specific dataset as an example \citep{angwin2016machine}.
Therefore, it is of interest to develop novel methods to produce predictions while avoiding disparate treatment on the basis of race.

\section{Generating data orthogonal to groups}
\subsection{Notation and setup}
Let $X$ denote an $n \times p$ data matrix of $p$ features measured over $n$ subjects, and let $Z$ denote an additional nuisance variable; for example, brain scanner in the neuroscience case study or race in the recidivism example. 
We focus for simplicity on a scalar $Z$, but the methods directly generalize to multivariate nuisance variables.
We seek to estimate $\widetilde \X$, an $n \times p$ reconstructed version of the data matrix that is orthogonal to $Z$ with minimal information loss. In our setting, the reconstructed version is used to produce a prediction rule $\hat Y(\widetilde X)$ that returns a prediction $\hat Y$ of $Y$ for any input $\tilde X$. 
Our aim is to guarantee that $\hat Y(\widetilde X)$ is statistically independent of $Z$.

We will focus on statistical models linear in the covariates, such as generalised linear models or support vector machines with linear kernels. It is easy to check that when $\hat Y(\widetilde X)$ is a linear function of $\widetilde X$, $\cov( \widetilde X, Z ) = 0$ implies $\cov(\hat Y, Z ) = 0$. Our procedure transforms $X$ by imposing orthogonality between $\widetilde \X$ and $Z$, while attempting to preserve as much of the information in $\X$ as possible. The former requirement is geometrically analogous to requiring that $Z$ is in the null space of $\widetilde \X$, 
so that it is not possible to predict $Z$ using the transformed variables in a statistical model which is linear in the covariates. Non-linear dependencies could still be present in the transformed matrix $\widetilde X$, so that predictions of models which are not linear in the covariates would still depend on $Z$. One potential solution to this problem is to include interactions in the matrix $X$ before transformation; we use this approach in the recidivism case study in Section \ref{sec:COMPAS}. Another possible solution is to attempt to remove nonlinear dependence as well, as in \cite{johndrow2019algorithm}. We do not pursue the latter here, favoring the simplicity and computational scalability of imposing a zero covariance constraint. Indeed, one of our two motivating applications is a large $p$, small $n$ problem, for which fast computation is critical.

Another desirable property of our procedure is dimensionality reduction. In high-dimensional settings, it is often assumed that covariate matrices have approximately a low-rank representation. We express a reduced rank approximation of $X$ as $\widetilde\X =SU^T$, where $\U$ is a $p \times k$ matrix of $k$ linear orthonormal basis vectors and $\Sb$ is the $n \times k$ matrix of associated scores. 
The problem of preprocessing the data to ensure Orthogonality to Groups (henceforth \textsc{og}) can be expressed as a minimization of the Frobenius distance between the original data and the approximated version, $\|\X - \widetilde \X\|^2_F$, under the constraint $\langle\widetilde\X,Z\rangle=0$. 
With $\widetilde \X = S U^T$, this leads to the following optimization problem.
\begin{equation}
\label{eq:min}
\argmin_{\Sb,\U}  \;\|\X - \Sb\U^T\|^2_F, \quad
\text{subject to}\; \langle \Sb \U^T,Z  \rangle = 0, \quad  \U \in \mathcal{G}_{p,k}
\end{equation}
where $\mathcal{G}_{p,k}$ is the Grassman manifold consisting of orthonormal matrices \citep{james1954normal}. While we focus on a univariate categorical $Z$ in this article, the procedure can be used as-is for continuous $Z$, and extension to multivariate $Z$ is straightforward. 

Since the constraints are separable, it is possible to reformulate Equation~\ref{eq:min} as $p$ distinct constraints, one over each column of $\widetilde \X$.
Moreover, since any column of $\widetilde \X$ is a linear combination of the $k$ columns of $\Sb$, and $\U$ is orthonormal, the $p$ constraints over $\widetilde \X$ can be equivalently expressed as $k$ constraints over the columns of the score matrix $\Sb$.
The matrix $\U$ is forced to lie on the Grassman manifold to prevent degeneracy, such as basis vectors being identically zero or solutions with double multiplicity.
The optimization problem admits an equivalent formulation in terms of Lagrange multipliers,
\begin{equation}
\label{eq:prob}
\argmin_{\Sb,\U}\bigg\{ \frac {1}{n}\sum_{i=1}^n \|x_i - \sum_{j=1}^k s_{ij} u_j^T\|^2 + \frac{2}{n}\sum_{j=1}^k\lambda_j \sum_{i=1}^n s_{ij}z_i\bigg\},
\end{equation}
with the introduction of the factor $2/n$ for ease of computation.

\subsection{Theoretical support}
\label{sec:fpl}
The following Lemma characterizes the solution of the \textsc{og} optimization problem, which can be interpreted as the residual of a multivariate regression among left singular values and a group variable. Let $V_k\Sigma_k\U_k^T$ denote the rank-$k$ Singular Values Decomposition (\textsc{svd}) of $X$.
\begin{lemma}
\label{lemma1}
The problem stated in Equation~\ref{eq:min} can be solved exactly, and admits an explicit solution in terms of singular values. The solution is equal to $\widetilde X = (I_n - P_z)V_k\Sigma_k\U_k^T$, with $P_Z = Z(Z^TZ)^{-1}Z^T$.
\end{lemma}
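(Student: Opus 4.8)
The plan is to solve \eqref{eq:min} by first reducing the constraint, then profiling out the scores, and finally optimizing the basis. The constraint $\langle SU^T,Z\rangle=0$ together with $U^TU=I_k$ is equivalent to $Z^TS=0$ (each score column orthogonal to $Z$), exactly the $k$-constraint reduction noted before the Lemma and encoded in \eqref{eq:prob}. So the feasible reconstructions are precisely the rank-$\le k$ matrices $\widetilde X=SU^T$ whose columns lie in $\operatorname{col}(Z)^\perp$, and I will minimize $\|X-SU^T\|_F^2$ over this set.

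First I would fix $U$ and optimize over $S$. Completing the square and using $U^TU=I_k$ gives $\|X-SU^T\|_F^2=\|X\|_F^2-\|XU\|_F^2+\|S-XU\|_F^2$, so the problem becomes a columnwise orthogonal projection of the scores $XU$ onto $\operatorname{col}(Z)^\perp$, yielding $S^\star(U)=(I_n-P_Z)XU$. This is the ``residual of regressing the scores on $Z$'' that the Lemma alludes to; the same expression falls out of the stationarity condition for \eqref{eq:prob}, namely $S=XU-Z\lambda^T$ with the constraint forcing $\lambda^T=(Z^TZ)^{-1}Z^TXU$.

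Next I would optimize over $U\in\mathcal{G}_{p,k}$. Substituting $S^\star(U)$ and writing $P_U=UU^T$, the residual splits as $X-\widetilde X=X(I_n-P_U)+P_ZXP_U$, and the cross term vanishes in Frobenius inner product because $P_U(I_n-P_U)=0$; this collapses the objective to $\|X\|_F^2-\operatorname{tr}\!\big(U^TX^T(I_n-P_Z)XU\big)$. A cleaner equivalent route avoids $U$ entirely: since $X=P_ZX+(I_n-P_Z)X$ and every feasible $\widetilde X$ has columns in $\operatorname{col}(Z)^\perp$, Pythagoras gives $\|X-\widetilde X\|_F^2=\|P_ZX\|_F^2+\|(I_n-P_Z)X-\widetilde X\|_F^2$, so the problem is exactly Eckart--Young applied to $(I_n-P_Z)X$. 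Its rank-$k$ truncation is automatically feasible, because its column space sits inside $\operatorname{col}\!\big((I_n-P_Z)X\big)\subseteq\operatorname{col}(Z)^\perp$.

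The step I expect to be the main obstacle is reconciling this outcome with the stated closed form. Both the Rayleigh--Ritz/Ky Fan argument and the Pythagorean argument point to the rank-$k$ truncation of $(I_n-P_Z)X$ (\emph{project, then truncate}), whereas $(I_n-P_Z)V_k\Sigma_kU_k^T$ truncates $X$ first and then projects. These coincide only when $\operatorname{col}(U_k)$ is invariant under $X^TP_ZX$, which is not automatic, so I would scrutinize whether the two operations commute here. The stated form is instead the exact solution of the \emph{two-stage} problem in which the SVD loadings $U=U_k$ are fixed in advance for dimension reduction and only the scores are adjusted, giving $S=(I_n-P_Z)V_k\Sigma_k$ and $U=U_k$ immediately. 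My proof would therefore hinge on either (i) arguing that the intended formulation fixes the basis, or (ii) supplying a hypothesis ensuring the invariance above; establishing that the fixed-basis solution also globally solves the joint problem \eqref{eq:min} is where I anticipate the real work, and possibly a needed caveat, to lie.
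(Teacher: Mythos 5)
Your route is genuinely different from the paper's, and more careful. The paper argues from the Lagrangian in Equation~\ref{eq:prob}: stationarity in the scores gives $s_{ij}=x_iu_j^T-\lambda_jz_i$, it then asserts that \emph{because the constraint does not involve} $u_j$ the optimal basis is the same as in the unconstrained problem (the top-$k$ right singular vectors of $X$), and finally it recovers $\lambda$ by least squares, assembling $\widetilde X=(I_n-P_Z)V_k\Sigma_kU_k^T$. Your profiling step exposes exactly why that middle assertion is unjustified: substituting $S^\star(U)=(I_n-P_Z)XU$ collapses the objective to $\|X\|_F^2-\operatorname{tr}\bigl(U^TX^T(I_n-P_Z)XU\bigr)$, so the optimal $U$ consists of the top-$k$ eigenvectors of $X^T(I_n-P_Z)X$ --- equivalently the right singular vectors of $(I_n-P_Z)X$ --- not of $X^TX$. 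Your Pythagorean argument reaches the same endpoint more cleanly: for any feasible $\widetilde X$, $\|X-\widetilde X\|_F^2=\|P_ZX\|_F^2+\|(I_n-P_Z)X-\widetilde X\|_F^2$, so the joint problem in Equation~\ref{eq:min} is Eckart--Young for $(I_n-P_Z)X$, whose rank-$k$ truncation is automatically feasible. That is ``project, then truncate,'' while the paper's closed form is ``truncate, then project,'' and as you note the two coincide only under an invariance condition on $\operatorname{col}(U_k)$ that does not hold in general.

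So the obstacle you flagged at the end is not an artifact of your approach: the paper's own proof makes precisely the leap you anticipated, treating the profiled optimization over $U$ as if it were unaffected by the constraint, and its appeal (in the proof of Lemma~\ref{lemma2}) to ``quadratic objective with linear constraint, hence local minima are global'' does not repair this, since the problem is not jointly convex over $\mathcal{G}_{p,k}$. Your resolution is the right one: either the formulation is read as a two-stage procedure in which the basis is fixed in advance at $U=U_k$ from the \textsc{svd} of $X$ and only the scores are adjusted --- in which case the stated formula $S=(I_n-P_Z)V_k\Sigma_k$, $U=U_k$ follows immediately --- or the exact solution of the joint problem should be stated as the rank-$k$ truncated \textsc{svd} of $(I_n-P_Z)X$. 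Your derivation of the latter is complete and correct as written; what it proves is that the Lemma's displayed formula solves the fixed-basis problem but is, in general, only a feasible (suboptimal) point for the joint problem.
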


Proofs are given in Appendix~\ref{app:proof}. The computational cost of the overall procedure is dominated by the cost of the truncated \textsc{svd}, which can be computed with modern methods in $O(nk^2)$ \citep{golub2012matrix}. The procedure outlined in Lemma~\ref{lemma1} is simple and only involves matrix decomposition and least squares theory; hence we can fully characterise the solution and its properties. In the univariate case, the expression in Lemma \ref{lemma1} is easily interpretable since $P_z$ and $(I_n - P_z)$ are $n\times n$ matrices with elements $$[P_z]_{ij}=\frac{1}{n} +\frac{z_iz_j}{\sum_{i=1}^n z_i^2}, \qquad [(I_n - P_z)]_{ij}= \mathbf{I}[i=j] - \frac{1}{n} - \frac{z_iz_j}{\sum_{i=1}^n z_i^2}, \quad i,j=1,\dots,n. $$
	
 The following Lemma guarantees that the solution $\widetilde X$ of the \textsc{og} algorithm optimally preserves information in $X$.
\begin{lemma}
\label{lemma2}
The solution $\widetilde X$ of the \textsc{og} algorithm is the best rank-$k$ approximation, in Frobenius norm, of the data matrix $\X$ under the \textsc{og} constraint.
\end{lemma}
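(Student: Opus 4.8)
The plan is to certify that the explicit matrix produced in Lemma~\ref{lemma1}, namely $\widetilde X = (I_n - P_Z) V_k \Sigma_k \U_k^T$, is the global minimiser of $\|\X - M\|_F^2$ over the feasible set $\mathcal{F} = \{M \in \mathbb{R}^{n\times p} : \operatorname{rank}(M) \le k,\ Z^T M = 0\}$. The first observation is that the constraint $\langle M, Z\rangle = 0$, read column by column, is exactly $Z^T M = 0$, which is equivalent to $P_Z M = 0$; that is, every column of $M$ lies in the orthogonal complement $Z^\perp$. Thus $\mathcal{F}$ is precisely the set of rank-$\le k$ matrices whose column space is orthogonal to $Z$. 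Since Lemma~\ref{lemma1} already provides a feasible closed form, the remaining task is a global optimality argument, which I would reduce to a single application of the Eckart--Young theorem after projecting out $Z$.

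First I would establish a Pythagorean decomposition of the objective. For any $M \in \mathcal{F}$ we have $M = (I_n - P_Z)M$, so writing $\X = P_Z \X + (I_n - P_Z)\X$ and using that $P_Z$ and $I_n - P_Z$ are orthogonal projections onto complementary subspaces gives
\[
\|\X - M\|_F^2 = \|P_Z \X\|_F^2 + \|(I_n - P_Z)\X - M\|_F^2 ,
\]
since the cross term $\langle P_Z \X,\ (I_n - P_Z)(\X - M)\rangle$ vanishes. The first summand does not depend on $M$, so minimising over $\mathcal{F}$ is equivalent to minimising $\|(I_n - P_Z)\X - M\|_F^2$ over $\mathcal{F}$.

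Next I would drop the rank-and-orthogonality constraints and invoke Eckart--Young: over all matrices of rank at most $k$, the minimiser of $\|(I_n - P_Z)\X - M\|_F^2$ is the rank-$k$ truncated \textsc{svd} of $(I_n - P_Z)\X$. The key point is that this unconstrained minimiser $M^\star$ is automatically feasible: because every column of $(I_n - P_Z)\X$ lies in $Z^\perp$, its left singular vectors associated with nonzero singular values also lie in $Z^\perp$, so $P_Z M^\star = 0$ and $\operatorname{rank}(M^\star) \le k$, whence $M^\star \in \mathcal{F}$. A feasible point that minimises the reduced objective over a strictly larger set is therefore the global constrained optimum, and by the decomposition above it also minimises $\|\X - M\|_F^2$ over $\mathcal{F}$.

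The main obstacle is the final identification: showing that this optimal truncated \textsc{svd} of $(I_n - P_Z)\X$ coincides with the closed form $(I_n - P_Z)V_k\Sigma_k\U_k^T$ of Lemma~\ref{lemma1}. This is delicate because orthogonal projection and \textsc{svd}-truncation do not commute in general, so the top-$k$ left singular subspace of $(I_n - P_Z)\X$ need not equal the image under $(I_n-P_Z)$ of the top-$k$ left singular subspace of $\X$. The natural way to close this gap is to carry the decomposition $\widetilde \X = \Sb \U^T$ through the \emph{projected} matrix from the outset, taking $\U$ to be the right singular vectors of $(I_n-P_Z)\X$ (equivalently the top-$k$ eigenvectors of $\X^T(I_n - P_Z)\X$) with scores $\Sb = (I_n - P_Z)\X\,\U$, and then verifying directly that this pair satisfies the stationarity conditions of the Lagrangian in Equation~\ref{eq:prob}, matching the optimiser $M^\star$ above. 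I expect this reconciliation of the two singular bases to be the only genuinely subtle part of the argument; the Pythagorean decomposition and the Eckart--Young step are routine once the feasible set is written as $\{\operatorname{rank}\le k,\ P_Z M = 0\}$.
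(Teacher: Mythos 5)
Your reduction is, up to its last step, more rigorous than the paper's own argument: rewriting the feasible set as $\{M : \operatorname{rank}(M)\le k,\ P_Z M=0\}$, the Pythagorean split $\|X-M\|_F^2=\|P_ZX\|_F^2+\|(I_n-P_Z)X-M\|_F^2$, and the Eckart--Young step together with the observation that the truncated \textsc{svd} of $(I_n-P_Z)X$ is automatically feasible are all valid. By contrast, the paper disposes of Lemma~\ref{lemma2} in two lines, asserting that the problem is ``quadratic with a linear constraint'' so that any stationary point is a global minimum --- an assertion that does not hold as stated, since the objective is quartic in $(S,U)$ jointly and the rank-$k$/Grassmann feasible set is non-convex; your global argument via Eckart--Young is the honest route to optimality over the true feasible set.

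The genuine gap is the final identification, which you correctly flag as the delicate step but whose difficulty you underestimate: it is not merely delicate, it fails in general. Your argument shows the constrained optimum is the rank-$k$ truncated \textsc{svd} of $(I_n-P_Z)X$ (project, then truncate), whereas the closed form of Lemma~\ref{lemma1} is $(I_n-P_Z)V_k\Sigma_k U_k^T$ (truncate, then project), and these differ whenever $Z$ is correlated with the leading left singular vectors of $X$. Concretely, take $n=p=2$, $k=1$, $Z=(1,0)^T$ and $X=\operatorname{diag}(10,1)$: then $(I_2-P_Z)V_1\Sigma_1U_1^T=0$ with squared error $101$, while the feasible rank-one matrix $\operatorname{diag}(0,1)$, which is exactly the truncated \textsc{svd} of $(I_2-P_Z)X$, attains error $100$. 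So no stationarity verification can reconcile the two singular bases. Your fallback construction --- taking $U$ to be the top-$k$ eigenvectors of $X^T(I_n-P_Z)X$ with scores $S=(I_n-P_Z)XU$ --- is in fact the correct global solution: eliminating $S$ from the Lagrangian in Equation~\ref{eq:prob} leaves the profiled objective $\|X\|_F^2-u^TX^T(I_n-P_Z)Xu$, so the optimal basis diagonalizes $X^T(I_n-P_Z)X$, not $X^TX$; the paper's claim that ``the constraint does not involve $u$'' overlooks this dependence. But carrying that out proves a corrected lemma about a modified algorithm, not the stated lemma about the output of Lemma~\ref{lemma1}; equality with the paper's formula holds only in special cases, for instance when $Z$ is orthogonal to the top-$k$ left singular vectors of $X$.
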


The \textsc{svd} achieves the minimum error in Frobenius distance among all matrices of rank-$k$ \citep[e.g.,][]{golub2012matrix}. Naturally, the introduction of additional constraints reduces the accuracy of the approximation relative to the \textsc{svd}. The following result bounds the additional error analytically.
\begin{lemma}
\label{lemma3}
Let $\widetilde X_k = V_k D_kU_k^T$ denote the best rank-$k$ approximation of the matrix $X$ obtained from the truncated \textsc{svd} of rank $k$. The reconstruction error of the \textsc{og} algorithm is lower bounded by the optimal error rate of $\widetilde X_k$, and the amount of additional error is equal to $\|P_zV_kD_k\|_F^2$.
\end{lemma}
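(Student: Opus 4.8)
The plan is to combine the closed-form \textsc{og} solution from Lemma~\ref{lemma1} with the well-known optimality of the truncated \textsc{svd}, reducing everything to a single Frobenius-norm expansion. Write $\widetilde X = (I_n - P_z) V_k D_k U_k^T$ for the \textsc{og} solution (with $D_k$ the diagonal matrix of the top $k$ singular values, written $\Sigma_k$ in Lemma~\ref{lemma1}) and $\widetilde X_k = V_k D_k U_k^T$ for the rank-$k$ \textsc{svd} reconstruction. Adding and subtracting $\widetilde X_k$ gives the elementary identity
\begin{equation*}
X - \widetilde X = \bigl(X - \widetilde X_k\bigr) + P_z V_k D_k U_k^T .
\end{equation*}
The whole argument then consists of expanding $\|X-\widetilde X\|_F^2$ using this decomposition and showing it splits cleanly into the \textsc{svd} error plus the claimed extra term.

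Expanding the squared Frobenius norm of the right-hand side yields three pieces,
\begin{equation*}
\|X - \widetilde X\|_F^2 = \|X - \widetilde X_k\|_F^2 + 2\,\bigl\langle X - \widetilde X_k,\, P_z V_k D_k U_k^T \bigr\rangle + \|P_z V_k D_k U_k^T\|_F^2 .
\end{equation*}
The final term is immediate: since $U_k$ has orthonormal columns, $U_k^T U_k = I_k$, so right-multiplication by $U_k^T$ is a Frobenius isometry and $\|P_z V_k D_k U_k^T\|_F^2 = \|P_z V_k D_k\|_F^2$, which is exactly the additional error asserted in the statement. The first term is the optimal \textsc{svd} reconstruction error, so everything hinges on the cross term.

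The main obstacle is therefore showing the cross term vanishes, and I would handle it by writing the \textsc{svd} residual through its trailing singular triples, $X - \widetilde X_k = V_{>k} D_{>k} U_{>k}^T$, where the columns of $U_{>k}$ are the right singular vectors orthogonal to those in $U_k$. Then, using $\langle A,B\rangle = \operatorname{tr}(A^T B)$ and the cyclic invariance of the trace,
\begin{equation*}
\bigl\langle V_{>k} D_{>k} U_{>k}^T,\, P_z V_k D_k U_k^T \bigr\rangle = \operatorname{tr}\bigl( D_{>k} V_{>k}^T P_z V_k D_k\, U_k^T U_{>k} \bigr) = 0,
\end{equation*}
since the \textsc{svd} orthogonality relation forces $U_k^T U_{>k} = 0$. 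Combining the three pieces gives $\|X - \widetilde X\|_F^2 = \|X - \widetilde X_k\|_F^2 + \|P_z V_k D_k\|_F^2$; because the added term is nonnegative, the \textsc{og} reconstruction error is bounded below by the optimal \textsc{svd} error, and the excess is precisely $\|P_z V_k D_k\|_F^2$, establishing both claims.
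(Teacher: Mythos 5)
Your proof is correct and takes essentially the same route as the paper's: the same decomposition $X - \widetilde X = (X - \widetilde X_k) + P_z V_k D_k U_k^T$ (the paper writes the correction term as $Z\lambda U_k^T$ with $\lambda$ the least-squares coefficient, which equals your $P_z V_k D_k$), the same vanishing cross term via orthogonality of the singular vectors, and the same isometry step $\|P_z V_k D_k U_k^T\|_F = \|P_z V_k D_k\|_F$ from the orthonormality of $U_k$. Your explicit trace computation for the cross term is simply a more detailed spelling-out of the paper's one-line appeal to singular-vector orthogonality.
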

The additional reconstruction error can be interpreted as a measure of the collinearity between the subspace spanned by $Z$ and the left singular vectors of the data $X$. The more correlated the singular vectors are with the group variable, the greater the additional error relative to the solution without the \textsc{og} constraint.
When $Z$ is in the null space of $X$, the solution is identical to the truncated singular value decomposition and the reconstruction achieves the minimum error.
Therefore, if the correlation between the group variable and the observed data is negligible, the procedure achieves a loss of information which is close to the \textsc{svd}. 
\subsection{Sparse \textsc{og} procedure}
\label{sec:sog}
Estimation of low-dimensional structure from high-dimensional data can be challenging when the number of features $p$ is larger than the number of observations $n$.
In this setting, common methods in multivariate analysis impose constraints on the elements of a matrix decomposition, usually through an $\ell_1$-norm penalty to favour sparsity and improve numerical estimation \citep[e.g.][]{zou2006sparse,jolliffe2003modified, witten2009}. 

To make the \textsc{og} problem tractable and stable when the number of features is very large -- potentially larger than the number of observations -- we introduce additional constraints in the algorithm. We will build our method on a standard procedure to perform sparse matrix decomposition \citep[e.g.][Chapter 8]{hastie2015statistical}, and adapt the computations to introduce the orthogonality constraint. We define the Sparse Orthogonal to Group (\textsc{sog}) optimization problem as follows.
\begin{equation}
\label{eq:sfpl}
\begin{split}
\argmin_{S,U} &\left\|X - SU^T \right\|_F^2 \\
\text{subject to}\quad \|u_j\|_2 \leq 1, \|u_j\|_1 \leq t,& \;\|s_j\|_2 \leq 1, s_j^Ts_l=0, \;s_j^T\z=0,
\end{split}
\end{equation}
for $j=1,\dots,k$ and $l\neq j$.
The problem in Equation~\ref{eq:sfpl} includes sparsity constraints over the vectors $u_j$ and imposes orthogonality constraints among the score vectors $s_j$ and the group variable, since the reconstruction $\widetilde \X = \Sb\U^T$ is a linear combination of the vectors $s_j$.

To fix ideas, we focus initially on a rank-$1$ approximation. Adapting the results in \citet{witten2009}, it is possible to show that the solutions in $s$ and $u$ for Equation~\ref{eq:sfpl} when $k=1$ also solve
\begin{equation}
\label{eq:equivalent}
\argmax_{s,u}\; s^T X u\quad \text{subject to}\quad \|u\|_2 \leq 1, \|u\|_1 \leq t, \;\|s\|_2 \leq 1,  \;s^T\z=0.
\end{equation}
Although the minimisation in Equation~\ref{eq:equivalent} is not jointly convex in $s$ and $u$, it can be solved with an iterative algorithm. Since the additional orthogonality constraints do not involve the vector $u$, when $s$ is fixed the minimisation step is mathematically equivalent to a sparse  matrix decomposition with constraints on the right singular vectors. This takes the form
\begin{equation}
\label{eq:s}
\argmax_{u}\; b u\quad \text{subject to}\quad \|u\|_2 \leq 1, \|u\|_1 \leq t,
\end{equation}
with $b=s^TX$ and solution 
\be
u=g(b,\theta)=\frac{\mathcal{S}_\theta (b)}{\|\mathcal{S}_\theta(b)\|_2},  
\ee
where $\theta$ is a penalty term in the equivalent representation of the constrained problem in \eqref{eq:equivalent} in terms of Langrange multipliers, and $\mathcal{S}_\theta(x) := \text{sign}(x)(|x| -\theta)\mathbb{I}(|x|\geq \theta)$ is the soft threshold operator applied over every element separately. The value of $\theta$ is $0$ if $\|b\|_1\leq t$, and otherwise $\theta > 0$ is selected such that $\|g(b,\theta)\|_1= t$ \citep{hastie2015statistical,witten2009}.


When $u$ is fixed, the solution is similar to that described in Section~\ref{sec:fpl}, which can be seen by rearranging Equation~\ref{eq:equivalent} to obtain
\begin{equation}
\label{eq:u}
\argmax_{s}\; s^T a \quad \text{subject to}\quad\|s\|_2 \leq 1,  \;s^T\z=0,
\end{equation}
with $a=Xu$ \citep{witten2009}. The solution to Equation~\ref{eq:u} is directly related to the method outlined in Section~\ref{sec:fpl}, and is given by the following expression.
$$s=\frac{a-\beta Z}{\|a-\beta Z\|_2},$$
with $\beta = (Z^TZ)^{-1}Z^Ta$.

Solutions with rank greater than $1$ are obtained by consecutive univariate optimization.
For the $j$-th pair $(u_j,s_j)$, $j=2,\dots,k$, the vector $a$ in Equation~\ref{eq:u} is replaced with $P_{k-1}Xu_j^T$, where ${P_{k-1} = I_{n\times n} - \sum_{l=1}^{k-1} s_l s_l^T}$ projects $Xu_j^T$ onto the complement of the orthogonal subspace ${\text{span}( s_l,\dots, s_{k-1})}$, thus guaranteeing orthogonality among the vectors $s_j$, $j=1,\dots,k$. A more detailed description of the algorithm outlined above is given in Appendix~\ref{app:alg}.

\section{Simulation study}
We conduct a simulation study to evaluate the empirical performance of the proposed algorithms and compare them with different competitors.
The focus of the simulations is on assessing fidelity in reconstructing a high-dimensional data matrix, success in removing the influence of the group variable from predictions for future subjects, and evaluation of the goodness of fit of predictions.
We also compare our methods with two popular approaches developed in biostatistics; specifically, the \textsc{combat} method of \citet{johnson2007} and the \textsc{psva} approach of \citet{leek2007}.
These methods adjust for batch effects via Empirical Bayes and matrix decomposition, respectively.
The R code implementing the methods illustrated in this article is available at \url{github.com/emanuelealiverti/sog}, and it also includes a tutorial to reproduce the simulation studies. The approaches used as competitors are available through the R package \textsc{sva} \citep{sva2012}.

The first simulation scenario puts $n=1000$, $p=200$, and $X$ has rank $k=10$. The data matrix $X$ is constructed in two steps. We simulate a loading matrix $S$, with size $(n,k)$, and a score matrix $U$ with size $(k,p)$, with entries sampled from independent Gaussian distributions. A group variable $Z$ of length $n$ is sampled from independent Bernoulli distributions with probability equal to $0.5$.
Each $p$-dimensional row of the $(n\times p)$ data matrix $X$ is drawn from a $p$-variate standard Gaussian distribution with mean vector $\mu_i = (s_i -  \lambda z_i)U$, $i=1, \dots, n$ and $\lambda$ is sampled from a $k$-variate Gaussian distribution with identity covariance. Lastly, a continuous response $Y$ with elements $y_i$, $i=1,\dots,n$ is sampled by first generating the elements of $\beta$ independently from $\text{Uniform}(-5, 5)$, then sampling $y_i \sim N((s_i - \lambda z_i)\beta,1)$. We highlight that in this setting the data matrix $X$ has a low-rank structure, and the response variable $Y$ is a function both of the group variable $Z$ and the low-dimensional embedding of $X$. In the second simulation scenario, the construction of the data matrix $X$ is similar to the first setting, except that the response variable $Y$ does not depend on $Z$. This is achieved by sampling the elements $y_i$ of $Y$ from standard Gaussians with mean vector $\mu_i = s_i \beta$, $i=1, \dots, n$. Therefore, the response $Y$ depends on the data only though the low-dimensional embedding of $X$. The third scenario focuses on a ``large $p$ - small $n$'' setting, in which the dimension of the data matrix $X$ is $n=200,p=1000$ with $k=10$, and its construction follows the first scenario, with dimensions of the score and loading matrix modified accordingly. In the fourth and last scenario, the low-rank assumption is relaxed and we focus on a case with $k=p$, following the same generating processes described above.

In each scenario, data are divided into a training set and a test set, with size equal to $3/4$ and $1/4$ of the observations, respectively. Therefore, the number of observations in the training and test set is equal to $750$ and $250$ respectively in the first, second and fourth scenario, and is equal to $150$ and $50$ in the third.
In each scenario, the proposed procedures and the competitors are applied separately over the training set and the test set. 
Then, linear regressions are estimated on the adjusted training sets and used to provide predictions $\hat{Y}$ over the different test sets, where the performance is evaluated by comparing predictions with the truth.

\begin{table}
	\caption{Out-of sample predictions for $Y$. Lower values correspond to more accurate predictions.\label{tab:sim} Values are averaged over $50$ random splits, with standard deviations reported in brackets. Best performance is highlighted in boldface.}
	\fbox{%
\begin{tabular}{lrrrrc|c}
& & \textsc{combat} &\textsc{psva} & \textsc{og} & \textsc{sog} & \textsc{svd} \\
\midrule

\em Scenario 1 & \textsc{rmse} & 20.73 (3.78) & 17.49 (5.81) & 15.31 (3.30) & \textbf{13.19} (2.64) & 21.48 (4.44) \\
	       & \textsc{mae}  & 16.54 (3.00) & 13.63 (4.49) & 12.13 (2.54) & \textbf{10.45} (2.09) & 17.00 (3.40) \\
	       & \textsc{mdae} & 13.92 (2.67) & 11.39 (3.65) & 10.13 (2.14) & \textbf{8.69} (1.84)  & 14.08 (2.77) \\
	   \midrule            
	   \em Scenario 2 & \textsc {rmse}& 15.37 (2.49) & 13.99 (3.63) & 11.54 (1.85) & \textbf{10.36} (1.71) & 15.75 (2.34) \\
	       & \textsc {mae} & 12.28 (1.96) & 10.84 (2.95) & 9.25 (1.51)  & \textbf{8.28} (1.38)  & 12.60 (1.89) \\
	       & \textsc {mdae}& 10.47 (1.74) & 9.16 (2.75)  & 7.85 (1.41)  & \textbf{6.97} (1.19)  & 10.71 (1.72) \\
	   \midrule            
	   \em Scenario 3 & \textsc {rmse}& 25.78 (4.54) & 22.93 (8.65) & 18.87 (3.96) & \textbf{16.15} (2.91) & 26.38 (4.74) \\
	       & \textsc {mae} & 20.87 (3.78) & 18.34 (7.08) & 15.14 (3.31) & \textbf{12.99} (2.45) & 21.15 (4.01) \\
	       & \textsc {mdae}& 18.14 (3.88) & 15.54 (6.58) & 12.64 (3.34) & \textbf{11.01} (2.47) & 17.86 (4.10) \\
	   \midrule            
	   \em Scenario 4 & \textsc {rmse}& 44.74 (2.55) & 45.70 (2.29) & 42.20 (1.98) & \textbf{41.93} (2.03) & 45.12 (2.27) \\
	       & \textsc {mae} & 35.96 (2.18) & 36.61 (2.01) & 33.92 (1.65) & \textbf{33.76} (1.72) & 36.28 (1.91) \\
	       & \textsc {mdae}& 30.75 (2.45) & 31.20 (2.37) & 29.25 (2.30) & \textbf{29.06} (2.03) & 31.00 (2.29) \\
\end{tabular}
   }
\end{table}

Table~\ref{tab:sim} reports the root mean square error (\textsc{rmse}), mean absolute error (\textsc{mae}) and median absolute error (\textsc{mdae}) for the adjusted predictions from the competitors (\textsc{combat}, \textsc{psva}), the proposed methods (\textsc{og}, \textsc{sog}) and the unadjusted case (\textsc{svd}), in which linear regressions are estimated on the left singular vectors of $X$. In order to reduce simulation error in the results from splitting the data into training and test sets, results are averaged across $50$ different splits, with standard deviations across splits reported in brackets.
Empirical results suggest that models estimated on datasets adjusted with the competitor methods have comparable performances, with \textsc{psva} providing more accurate predictions than \textsc{combat} in most settings. 
Both \textsc{og} and \textsc{sog} outperform the competitors in all the settings considered.
For example, in the third scenario involving a $p \gg n$ setting, the \textsc{rmse} of the \textsc{psva} is $22.93$ compared to $16.15$ for \textsc{sog}, which also provides better results for the other metrics considered.
It is worth highlighting that even when $X$ is not low rank, as in the fourth scenario, our methods provide more accurate out-of-sample predictions than the competitors.

We also compare the accuracy of each method in recovering the data matrix $X$ in simulation scenario 1 with increasing values of the rank $k$ of the approximation $\widetilde{X}$. 
The rank of the reconstruction is directly specified for \textsc{og}, \textsc{sog} and \textsc{psva} during estimation.
Batch-adjustment via \textsc{combat}, instead, does not utilize a low-rank representation.
To create a fair basis for comparison, we use a rank-$k$ \textsc{svd} decomposition of the batch-corrected matrix obtained from \textsc{combat} in the comparison of reconstruction accuracy.
Table~\ref{tab:simX} illustrates the Frobenius norms between the training set of the first scenario and its approximations for increasing values of the rank of each approximation. 
Results are averaged over $50$ random splits, and standard deviations across splits are reported in brackets. 

For each value of the rank $k$, the best reconstruction is, as expected, obtained by the \textsc{svd} decomposition, which returns a perfect reconstruction for ranks greater or equal than the true value $k=10$, but of course \textsc{svd} does not impose orthogonality to $Z$, and thus is reported only to provide a basis for comparison of the reconstruction error of the other methods.
Among the procedures that do seek to remove the influence of $Z$, the best reconstruction (minimum error) is obtained by \textsc{og} for every value of $k$. This empirical result is consistent with Lemma~\ref{lemma2}, which illustrates the optimality of the method in terms of reconstruction error.  For values of $k$ greater than or equal to $10$, the error is stable and corresponds to the quantity derived in Lemma~\ref{lemma3}. 
The error achieved with the \textsc{sog} algorithm is considerably higher than that of the \textsc{og} algorithm. This is expected since the true singular vectors are not sparse in our simulation scenarios. Despite this, \textsc{sog} performs better than \textsc{og} in out of sample prediction, suggesting that the additional regularization is useful in prediction. 
Among the competitors, \textsc{combat} is more accurate in reconstructing $X$ than \textsc{psva}, which encounters estimation issues with values of the rank greater than $8$, as illustrated by the increasing standard deviation of the reconstruction error with large $k$. 



\begin{table}
	\caption{Reconstruction error of the data matrix in the first scenario, measured in Frobenius norm. Results are averaged across $50$ random splits, with standard deviations reported in brackets. Lower values represent more accurate reconstructions. \label{tab:simX}}
	\fbox{%
\begin{tabular}{ccccc|c}
Rank $k$ & \textsc{combat} &\textsc{psva} & \textsc{og} & \textsc{sog} & \textsc{svd} \\
  \midrule
2  & 310.48 (1.45) & 347.78 (2.97)   & 309.03 (1.48) & 338.01 (1.56) & 305.65 (1.31) \\
3  & 282.80 (1.36) & 326.58 (3.25)   & 281.07 (1.38) & 318.90 (1.49) & 277.32 (1.23) \\
4  & 256.61 (1.24) & 304.34 (4.70)   & 254.53 (1.27) & 300.93 (1.38) & 250.28 (1.07) \\
5  & 228.90 (1.35) & 286.01 (15.83)  & 226.37 (1.37) & 282.60 (1.71) & 221.56 (1.06) \\
6  & 202.38 (1.33) & 264.54 (18.54)  & 199.27 (1.33) & 266.01 (1.50) & 193.74 (1.00) \\
7  & 174.52 (1.31) & 247.20 (36.40)  & 170.67 (1.33) & 249.55 (1.42) & 164.13 (1.04) \\
8  & 143.73 (1.41) & 231.63 (58.65)  & 138.83 (1.41) & 233.48 (1.56) & 130.62 (1.13) \\
9  & 107.80 (1.62) & 291.11 (106.68) & 100.86 (1.63) & 217.29 (1.69) & 89.19 (1.11) \\
10 & 60.97 (2.85)  & 139.36 (187.70) & 47.09 (3.16)  & 201.88 (1.92) & 0.00 (0.00) \\
11 & 62.49 (2.82)  & 162.44 (192.83) & 47.09 (3.16)  & 201.64 (3.21) & 0.00 (0.00) \\
\end{tabular}
   }
\end{table}


Finally, we analyse success in removing information about $Z$ from predictions based on adjusted data. Figure~\ref{fig:sim} compares the empirical correlation between the out-of-sample predictions for $\hat Y$ and the group variable $Z$, over $50$ random splits, for the first simulation scenario.
Results from Figure~\ref{fig:sim} suggest a general tendency of all the adjustment methods to reduce the correlation between the out-of-sample predictions and group variable, confirming the ability of the proposed algorithms and the competitors to remove the linear effect of group information from predictions.
Results from the two competitor approaches suggest that predictions from \textsc{combat} have smaller correlation than \textsc{psva} on average, and results are also less variable across different splits.
The high variability observed with \textsc{psva}, and its very modest effect at reducing correlation with $Z$, might be due to the convergence issues described previously.
As expected, predictions from \textsc{og} and \textsc{sog} always have zero correlation across all splits, since this constraint is explicitly imposed in the algorithm during optimization.

\begin{figure}[tb]
\includegraphics[width = \textwidth]{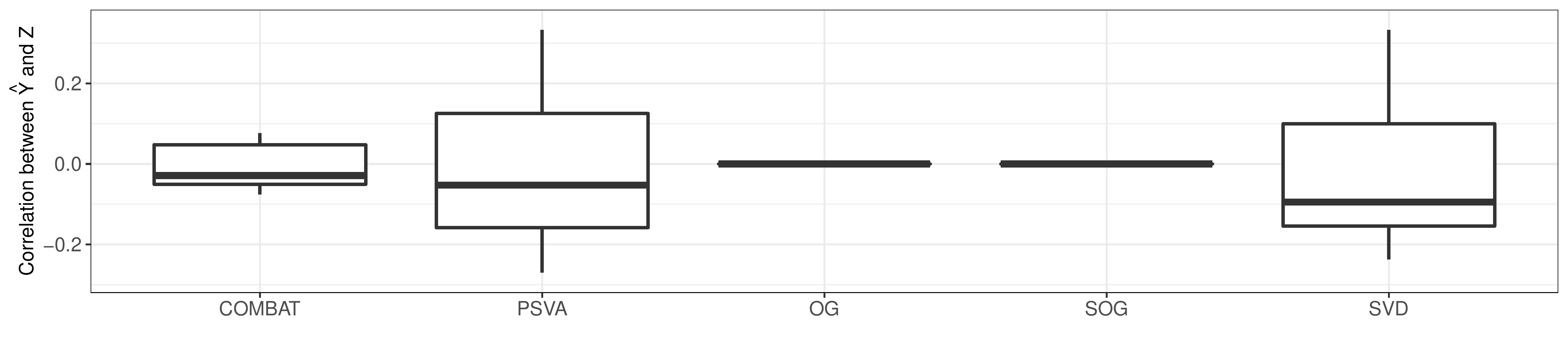}
\caption{Correlation between $\hat{Y}$ and $Z$ across $50$ random splits into training and test set in the first scenario.}
\label{fig:sim}
\end{figure}

\section{Application}
\label{sec:secAppl}
\subsection{Human connectome project} \label{sec:brains}
To apply \textsc{og} and \textsc{sog} to the imaging case study, the brains scans were first vectorised into a $n \times p$ matrix $\X$, with $n=1065$ subjects and $p=2278$ features corresponding to the strength of connection among all pairs of brain regions. The outcome of interest is coded as a binary variable $Y$ indicating a positive result to a drug test for at least one among Cocaine, Opiates, Amphetamines, MethAmphetamine and Oxycontine. The nuisance variable $Z$ indicates the machine on which the data were gathered. Data are randomly divided into a training and a test set, with size equal to $798$ and $267$ observations, respectively. In order to reduce simulation error from data splitting, results are averaged over $50$ different splits into training and test. Table~\ref{table:brains} illustrates the averages and the standard deviations of Accuracy (\textsc{acc}), Area Under the Roc Curve (\textsc{auc}), True Positive Rates (\textsc{tpr}), and True Negative Rates (\textsc{tnr}) for the out-of-sample predictions across splits. The threshold for predicting observations as positive corresponds to the proportion of positive observations in the training data, which is very small (around $3\%$ for both scanners).

To provide a basis for comparison, analyses are conducted using the original covariates without any adjustment. The first columns of Table~\ref{table:brains} represent results for a logistic regression using sparse \textsc{pca} with $k=30$ components as covariates.
The second and third columns compare predictive performance for Lasso (\textsc{lasso}) and  Random Forest (\textsc{rf}), using all the unadjusted available covariates. 
The right half of Table~\ref{table:brains} shows results for the adjusted procedures, with the columns \textsc{og} and \textsc{sog} giving results from a logistic regression estimated on $\tilde X$ obtained from our two methods, while \textsc{lasso,og} and \textsc{rf,og} give results for predictions from Lasso and Random Forest estimated after \textsc{og} adjustment.
Empirical findings suggest that predictive performance is not noticeably worse for models estimated on reconstructed data $\tilde X$.
In some cases, performance improves, while in other cases it declines. A similar phenomenon is observed in the recidivism example, suggesting that the loss of predictive performance after performing our pre-processing procedure can be minimal in applications.
Figure~\ref{fig:brains} provides additional results illustrating the empirical distribution of $\hat Y$ under the four approaches. Results suggests that predictions produced by \textsc{logistic} and \textsc{lasso} are different across two scans, with predicted probabilities from the first Scanner (black histogram) being on average greater. 
After pre-processing using our procedures, the empirical distribution of predicted probabilities of drug consumption becomes more similar, with results from predictions formed on \textsc{sog}-preprocessed data being close to equal across scanners.  
We also conducted sensitivity analysis for different values of the approximation rank ranging in $\{10,50,100\}$. The results were consistent with the main empirical findings discussed above.

\begin{table}
	\caption{Predictive performance on the \textsc{hcp} dataset. Higher values correspond to more accurate predictions. Best performance is highlighted in boldface.
	\label{table:brains}}
\fbox{%
\begin{tabular}{llrc|ccccc}
	   &\textsc{logistic} &\textsc{ lasso} &\textsc{ rf} &\textsc{ og  } &\textsc{ sog } &\textsc{ lasso, og} &\textsc{ rf, og }  \\
	  \midrule
	  \textsc{acc} & 0.60 (0.06)          & \textbf{0.62} (0.06) & 0.62 (0.06) & 0.72 (0.32) & \textbf{0.79} (0.23) & 0.53 (0.06)          & 0.58 (0.07) \\
	  \textsc{auc} & \textbf{0.54} (0.08) & 0.52 (0.08)          & 0.51 (0.10) & 0.51 (0.06) & 0.53 (0.08)          & \textbf{0.57} (0.10) & 0.52 (0.10) \\
	  \textsc{tnr} & 0.61 (0.06)          & \textbf{0.63} (0.06) & 0.63 (0.07) & 0.27 (0.34) & 0.20 (0.25)          & \textbf{0.43} (0.07) & 0.37 (0.07) \\
	  \textsc{tpr} & 0.29 (0.18)          & \textbf{0.40} (0.16) & 0.33 (0.19) & 0.55 (0.33) & \textbf{0.57} (0.28) & 0.42 (0.18)          & 0.53 (0.18) \\
\end{tabular}
}
\end{table}

\begin{figure}[t]
\includegraphics[width = \textwidth]{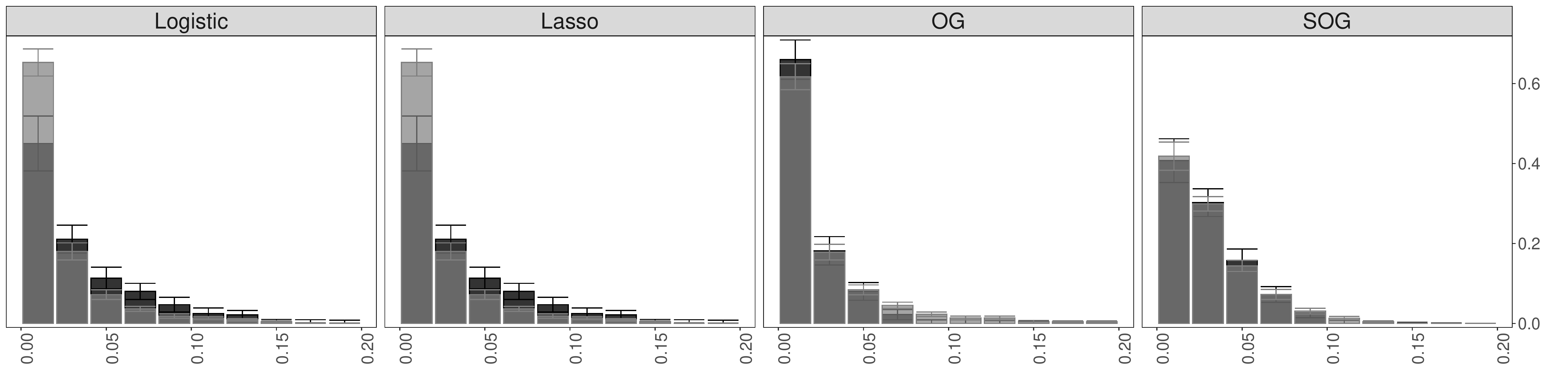}
\caption{Histograms for $\hat Y$ under the four approaches. Light gray corresponds to Scanner 1, black to Scanner 2. Regions where the two histograms overlap are shown as dark gray. The more extensive overlap between the histograms by scanner in the right two panels indicates success of the method at removing scanner information from the data. }
\label{fig:brains}
\end{figure}

\subsection{\textsc{compas} recidivism data}
\label{sec:COMPAS}
We apply our methods to the recidivism dataset by first constructing a design matrix which includes the available features and all the interaction terms, for a total of $p=64$ variables. 
The proportion of individuals with a positive outcome was roughly $40\%$ for Caucasian and $50\%$ for non-Caucasian.
Although the number of features is not particularly large in this case study, and considerably smaller than in the previous example, methods for pre-processing $X$ that require manual interaction with the statistician, such as \citet{johndrow2019algorithm}, are burdensome for even 64 covariates, since each covariate must be modeled separately conditional on others.
Moreover, the inclusion of every interaction term induces substantial collinearity in the design matrix, so that low-rank approximation of the data matrix is likely to be both accurate and improve computational and predictive performance for models estimated on the data.
Data are randomly divided into a training set and a test set, with size $5090$ and $1697$ observations, respectively. 
Table~\ref{table:COMPAS} reports Accuracy (\textsc{acc}), Area Under the Roc Curve (\textsc{auc}), True Positive Rates (\textsc{tpr}), True Negative Rates (\textsc{tnr}), Positive Predicted Values (\textsc{ppv}) and Negative Predicted Values (\textsc{npv}) of the out-of-sample predictions. Results are averaged over $50$ random splits of the data, with standard deviations reported in brackets.

The first and second columns of Table~\ref{table:COMPAS} represent, respectively, results for a logistic regression using all the available original covariates plus all interaction terms, and all the covariates and interaction terms plus race, while the third column corresponds to a Random Forest trained on the original data. 
The second part of \ref{table:COMPAS} illustrates results for the adjusted procedures, with \textsc{og} and \textsc{sog} reporting results from a logistic regression estimated on the adjusted data and \textsc{rf,og} from a Random Forest estimated on data preprocessed using \textsc{og}. As observed in the case of the brain scan data, predictive performance on the whole is quite similar when data are first pre-processed using our procedures, indicating that in some applications there is an almost ``free lunch;'' information on $Z$ can be removed with very little effect on the overall predictive performance on average.

Figure~\ref{fig:COMPAS} assesses success in removing dependence on race from $\hat Y$ by comparing the empirical cumulative distribution function (\textsc{cdf}) of the predicted probabilities with and without pre-processing of $X$. The first two panels of Figure~\ref{fig:COMPAS} represent the empirical \textsc{cdf} of $\hat{Y}$ for the models denoted as \textsc{logistic} and \textsc{logistic,z} in Table~\ref{table:COMPAS}. Results are averaged across splits, with bands corresponding to the $0.975$ and $0.025$ quantiles reported as dotted and dashed lines to illustrate variability of the curves across splits.
Without adjustment, white defendants (gray curves) are assigned lower probabilities of recidivism. This issue affects predictions both excluding the race variable (first panel) and including it is a predictor (second panel).
The third and fourth panels correspond, respectively, to predictions obtained from logistic regressions estimated on the data pre-processed with the \textsc{og} and \textsc{sog} procedures. 
The gap between the empirical \textsc{cdf}s is substantially reduced, both with the standard \textsc{og} and the sparse implementation, leading to distributions of the predictive probabilities which are more similar across different racial groups. This indicates that the procedure was largely successful at rendering predictions independent of race.

\begin{table}
	\caption{Predictive performance on the \textsc{compas} dataset. Higher values correspond to more accurate performances. Best performance is highlighted in boldface.
	\label{table:COMPAS}} 
\fbox{%
\begin{tabular}{llcc|cccccc}
    &\textsc{logistic} &\textsc{logistic, z} & \textsc{ rf} &\textsc{og} &\textsc{sog} &\textsc{rf, og}  \\
\toprule      
\textsc{acc} & 0.669 (0.02) & \textbf{0.671} (0.01) & 0.671 (0.01)          & \textbf{0.654} (0.01) & 0.654 (0.01) & 0.606 (0.01) \\
\textsc{auc} & 0.712 (0.02) & \textbf{0.714} (0.02) & 0.712 (0.01)          & \textbf{0.708} (0.01) & 0.708 (0.01) & 0.642 (0.01) \\
\textsc{tnr} & 0.719 (0.02) & 0.716 (0.04)          & \textbf{0.766} (0.02) & 0.659 (0.02) & \textbf{0.661} (0.02) & 0.608 (0.01) \\
\textsc{tpr} & 0.609 (0.05) & \textbf{0.617} (0.03) & 0.558 (0.02)          & \textbf{0.649} (0.01) & 0.647 (0.01) & 0.602 (0.02) \\
\textsc{ppv} & 0.644 (0.02) & 0.645 (0.02)          & \textbf{0.665} (0.02) & 0.613 (0.01) & \textbf{0.614} (0.01) & 0.562 (0.02) \\
\textsc{npv} & 0.689 (0.02) & \textbf{0.692} (0.01) & 0.675 (0.01)          & \textbf{0.692} (0.01) & 0.691 (0.01) & 0.647 (0.01) \\
\end{tabular}}
\end{table}

\begin{figure}[h]
\includegraphics[width = \textwidth]{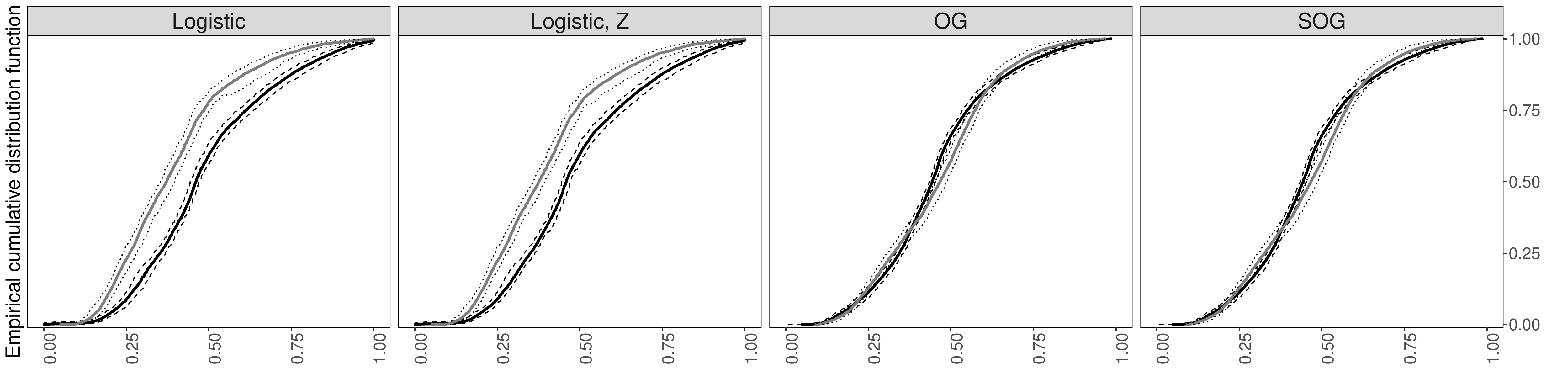}
\caption{Empirical cumulative distribution functions for $\hat Y$ under the four approaches. Gray refers to white ethnicity, black to non-white. Dashed and dotted lines corresponds to confidendence bands.}
\label{fig:COMPAS}
\end{figure}

\section{Discussion}
We have proposed a simple approach for adjusting a dataset so that predictive algorithms applied to the adjusted data produce predictions that are independent of a group variable.  This is motivated by two very different applications, involving adjusting for batch effects in neuroimaging and removing the influence of race/ethnicity in risk assessments in criminal justice.  Although these applications areas are fundamentally different, our adjustment procedure can be used in both cases.  An appealing aspect is that the procedure is agnostic to the type of predictive algorithm used, so that a single adjusted dataset can be released without prior knowledge of the types of predictive algorithms that will be applied.

There are several interesting areas for future research building upon the proposed framework.  One interesting direction is to consider more complex types of covariates.  For example, in neuroimaging one may obtain a tensor- or function-valued predictor.  For functional predictors, it is potentially straightforward to modify the proposed framework by leveraging developments in functional principal components analysis (\textsc{fpca}), combining \textsc{fpca} factorizations with our proposed algorithm.  For tensor predictors, one can similarly rely on tensor \textsc{pca}-type factorizations. 
However, in such settings, it is important to think carefully about the ramifications of orthogonality assumptions, as unlike for matrices most tensors are not orthogonally decomposable. Another important direction is to consider more complex data structures; for example, containing longitudinal observations, a network-type dependence structure and potentially censoring.  In the risk assessment setting, data are increasingly collected over time and there is dependence across different individuals which should be accommodated.

\section*{Acknowledgments}
This work was partially supported by grant ``Fair Predictive Algorithms'' from the Laura \& John Arnold Foundation.

\appendix
\section{Appendix}
\subsection{Results on \textsc{og} procedure}
\label{app:proof}
\begin{proof}[Proof of Lemma~\ref{lemma1}]
Focusing on the case $k=1$, the approximation of the original set of data consists of finding the closest rank-$1$ matrix (vector).
Equation~\ref{eq:prob} can be reformulated as

\begin{align}
\label{eq:1d}
\argmin_{s_1,u_1}&\bigg\{ \frac {1}{n}\sum_{i=1}^n \|x_i - s_{i1} u_1^T\|^2 + \frac{2}{n}\lambda_1 \sum_{i=1}^n s_{i1}z_i\bigg\},
\end{align}

and some algebra and the orthonormal condition on $u_1$ allows us to express the loss function to be minimized as
\begin{align*}
L(s_1,u_1) &=\frac {1}{n}\sum_{i=1}^n (x_i - s_{i1} u_1^T)^T (x_i - s_{i1} u_1^T) + \frac{2}{n}\lambda_1 \sum_{i=1}^n s_{i1}z_i \\
&=\frac {1}{n}\sum_{i=1}^n (x_i^Tx_i - 2s_{i1}x_iu_1^T+s_{i1}^2)+\frac{2}{n}\lambda_1 \sum_{i=1}^n s_{i1}z_i.
\end{align*}
\noindent
The function is quadratic, and the partial derivative in $s_{i1}$ leads to
\begin{align*}
\frac{\partial}{\partial s_{i1}}L(s_1,u_1) = \frac{1}{n}(- 2x_iu_1^T+2s_{i1}) +\frac{2}{n}\lambda_1 z_i,
\end{align*}
with stationary point given by $s_{i1} = x_iu_1^T-\lambda_1 z_i$.
The optimal score for the $i$-th subject is obtained by projecting the observed data onto the first basis, and then subtracting $\lambda_1$-times $z$.
The constraint does not involve the orthonormal basis $u_1$, hence the solution of Equation~\ref{eq:1d} for  $u_1$ is equivalent to the unconstrained scenario.
A standard result of linear algebra states that the optimal $u_1$ for Equation~\ref{eq:1d} without constraints equivalent to the first right singular vector of $\X$, or equivalently to the first eigenvector of the matrix $\X^T\X$ \citep[e.g.,][]{hastiefriedman}.
Plugging in the solution for $u_1$ and setting the derivative with respect to $\lambda_1$ equal to $0$ leads to

\begin{equation}
\sum_{i=1}^n (x_i u_1^T-\lambda_1 z_i) ^T z_i  = 0 \qquad \lambda_1 = \frac{\sum_{i=1}^n x_i u_1^T z_i}{\sum_{i=1}^n z_i^2} = \frac{\langle Xu_1^T,z \rangle}{\langle z,z\rangle},   
\end{equation}

a least squares estimate of $\X{u}_1^T$ over $z$.

\noindent
Consider now the more general problem formulated in Equation~\ref{eq:prob}. 
The derivatives with respect to the generic element $s_{ij}$ can be calculated easily due to the constraint on $\U$, which simplifies the computation.
Indeed, the optimal solution for the generic score $s_{ij}$ is given by 
\begin{equation}
\label{eq:derk}
s_{ij} = x_iu_j^T-\lambda_j z_i,
\end{equation} 
since $u_i^Tu_j=0$ for all 
$i \neq j$ and $u_j^Tu_j = 1$ for $j=1,\dots,k$.
The solution has an intuitive interpretation, since it implies that the optimal scores for the $j$-th dimension are obtained projecting the original data over the $j$-th basis, and then subtracting $\lambda_j$-times the observed value of $z$.
Moreover, since the \textsc{og} constraints do not involve any vector $u_j$, the optimization with respect to the basis can be derived from known results in linear algebra. The optimal value for the vector $u_j$, with $j=1,\dots,k$, is equal to the first $k$ right singular values of $\X$, sorted accordingly to the associated singular values \citep[e.g.,][]{bishop2006pattern,hastie2015statistical}.

The global solution for $\lambda = (\lambda_1, \dots, \lambda_k)$ can be derived from least squares theory, since we can interpret Equation~\ref{eq:derk} as a multivariate linear regression where the $k$ columns of the projected matrix $\X\U^T$ are response variables and $z$ a covariate.
The general optimal value for $\lambda_k$ is then equal to the multiple least squares solution $$\lambda_k = \frac{\langle\X u_k^T,z\rangle}{\langle z,z\rangle}.$$ 
\end{proof}
\begin{proof}[Proof of Lemma~\ref{lemma2}]
Since the optimization problem of Equation~\ref{eq:min} is quadratic with a linear constraint, any local minima is also a global minima. The solution performed via the singular value decomposition and the least squares constitute a stationary point, that is also global minimum.
\end{proof}

\begin{proof}[Proof of Lemma~\ref{lemma3}]
	Let $V_kD_kU_k^T$ define the rank-$k$ truncated \textsc{svd} decomposition of the matrix $\X$, using the first $k$ left and right singular vectors, and the first $k$ singular vales.
Let $\widetilde X_{OG}$ define the approximated reconstruction obtained by the \textsc{og} algorithm.
The reconstruction error between the original data matrix $X$ and its low-rank approximation $\widetilde X_{OG}$ can be decomposed as follow.
\begin{eqnarray*}
\|\X - \widetilde X_{OG}\|^2_F  =&\|\X - (V_k D_k-Z\lambda )U_k^T\|^2_F \\ 
=&\|\X - V_kD_kU_k^T+Z\lambda U_k^T\|^2_F\\
=&\|\X - V_kD_kU_k^T\|_F^2 + \|Z\lambda U_k^T\|^2_F
 +&2\langle\X - V_kD_kU_k^T,Z\lambda U_k^T \rangle_F.
\end{eqnarray*}
The Frobenius-inner product term is equal to $0$ due to the orthogonality of the singular vectors, and rearranging terms the following expression is obtained.
\begin{eqnarray*}
	\|\X - \widetilde\X_{OR}\|^2_F - \|\X - V_kD_kU_k^T\|_F^2 = &\|Z\lambda U^T\|^2_F=\|Z\lambda \|_F^2,
\end{eqnarray*}

Since the optimal value for $\lambda$ is equal to the least squares solution of $Z$ over $V_kD_k$, it follows that  $\|Z\lambda \|_F^2=\|Z(Z^TZ)^{-1}Z^TV_kD_k\|_F^2 = \|P_Z V_kD_k\|_F^2$, and the proof is complete. 
\end{proof}


\subsection{Sparse \textsc{og} procedure}
\label{app:alg}
The following pseudo-code illustrates the key-steps to implement the \textsc{sog} procedure illustrated in Section~\ref{sec:sog}. An \textsc{r} implementation is available at \url{github.com/emanuelealiverti/sog}.

\begin{algorithm*}[H]
	\caption{\textsc{sog} algorithm}
	\KwIn{Data matrix $X$, $n\times p$. Approximation rank $k$.}
	\For{$j=1,\dots,k$}{
		\While{Changes in $u_j$ and $s_j$ are not sufficiently small}{
	Compute $\beta_j$ via least squares as  $$\beta_j = (Z^TZ)^{-1}Z^TP_{j-1}\X u_j,$$ with $P_{j-1} = I_{n\times n} - \sum_{l=1}^{j-1} s_l s_l^T$

	Update $ s_j \in \mathbb{R}^n$ as $$ s_j = \frac{P_{j-1}\X u_j - \beta_j Z}{\|P_{j-1} \X u_j  -\beta_j Z\|_2}$$

 Update $ u_j \in \mathbb{R}^p$ as $$ u_j = \frac{\mathcal{S}_\theta (\X^T s_j)}{\|\mathcal{S}_\theta(\X^T s_j)\|_2},$$
where $\mathcal{S}_\theta(x) = \text{sign}(x)(|x| -\theta)\mathbb{I}(|x|\geq \theta)$ and

\eIf{$\|\X^T s_j\|_1\leq t$}{Set $\theta = 0$}{
		Set $\theta > 0$ such that $\|u_j\|_1 = t$
	}
	}
	}
	\KwOut{Set $d_j =  s_j^T\X u_j$. Let $\Sb$ denote the $n \times k$ matrix with columns $d_j s_j$, $j=1,\dots,k$.
	Let $\U$ denote the $p\times k$ sparse matrix with rows $u_j$, $j=1,\dots,k$. Return $\widetilde X = SU^T$}
\end{algorithm*}


\begingroup
    \fontsize{11pt}{12pt}\selectfont
\bibliographystyle{rss}
\bibliography{short.bib}
\endgroup

\end{document}